\newtheorem{theorem}{Theorem}
\newtheorem{lemma}[theorem]{Lemma}
\begin{document}

\title{Necessary and sufficient condition for saturating the upper bound
of quantum discord}

\author{Zhengjun Xi}
\affiliation{College of Computer Science, Shaanxi Normal University, Xi'an, 710062,
China}

\author{Xiao-Ming Lu}
\affiliation{Centre for Quantum Technologies, National University of Singapore,
3 Science Drive 2, Singapore 117543}

\author{Xiaoguang Wang}
\affiliation{Zhejiang Institute of Modern Physics, Department of Physics, Zhejiang
University, Hangzhou, 310027, China}

\author{Yongming Li}
\affiliation{College of Computer Science, Shaanxi Normal University, Xi'an, 710062,
China}

\begin{abstract}
We revisit the upper bound of quantum discord given by the von Neumann
entropy of the measured subsystem. Using the Koashi-Winter relation,
we obtain a trade-off between the amount of classical correlation
and quantum discord in the tripartite pure states. The difference
between the quantum discord and its upper bound is interpreted as
a measure on the classical correlative capacity. Further, we give
the explicit characterization of the quantum states saturating the
upper bound of quantum discord, through the equality condition for
the Araki-Lieb inequality. We also demonstrate that the saturating of the
upper bound of quantum discord precludes any further correlation between
the measured subsystem and the environment.
\end{abstract}

\pacs{03.67.a, 03.65.Ud, 03.65.Ta}

\maketitle

\section{Introduction}

Recently, it has been recognized that entanglement does not depict
all possible quantum correlations contained in a bipartite state.
Based on the measurement on the subsystem in the bipartite system, quantum
discord was proposed as a measure for the quantum correlation beyond
entanglement~\cite{Zurek00,Ollivier02,Vedral02,Celeri11,Modi2011}. Quantum
discord was viewed as a figure of merit for characterizing the nonclassical
resources in the deterministic quantum computation with one-qubit~\cite{Knill98,Datta08}.
It was also discussed that zero-discord of the initial system-environment
states is a necessary and sufficient condition for completely positivity
of reduced dynamical maps~\cite{Rosario08,Shabani09}. At the same
time, a necessary and sufficient condition for nonzero-discord was
also given for any dimensional bipartite states~\cite{Vedral11}.
Most recently, operational interpretations of quantum discord were
proposed in~\cite{Datta11,Winter11}, where quantum discord was shown
to be a quantitative measure about the performance in the quantum
state merging~\cite{Horodecki05}. Over the past decade, quantum
discord has received a lot of attentions in Refs.~\cite{Cen11,Modi10,Lu11,Luo10,Adesso11,Guzi11,Fei11,Chen11,
RioNature,Ferraro10,Ali10,Streltsov11,Piani11,Cornelio11,Fanchini11a,Mazzola10,Lanyon08},
and also see the reviews \cite{Modi2011,Celeri11}.

In general, quantum discord is upper bounded by the entropy of the
measured subsystem~\cite{Datta1003,Xi11,LuoPRAa,LuoPRAb,Terno11},
but it has remained an open question when this bound is saturated
for general mixed states. For this question, some sufficient conditions
were given in~\cite{Xi11,LuoPRAa,LuoPRAb,Zhang11}. This motivates
us to systematically investigate the upper bound of quantum discord
and give a necessary and sufficient condition for saturating the upper
bound of quantum discord. On the other hand, as an achievable upper
bound of the quantum discord, the von Neumann entropy of a system
can be considered as the quantum correlative capacity which tells
us how strongly can this system be quantum correlated with others~\cite{LuoPRAa}.
So we want to ask another question: what can the closeness between the quantum
discord and the upper bound tell us?

It is interesting that there are monogamic relations between different
measures on correlations~\cite{Koashi04}. In this paper, by purifying
the bipartite systems and using the Koashi-Winter relation~\cite{Koashi04},
we obtain a new monogamic relation: a system being quantum correlated
with another one limits its possible classical correlation with a
third system. For the bipartite systems, the total amount of quantum
discord between two subsystems and the classical correlation between
the measured subsystem and the environment cannot exceed the von Neumann
entropy of the measured subsystem. We further prove that the necessary and sufficient condition for saturating the upper bound of quantum
discord is equivalent to the equality condition for the Araki-Lieb
inequality~\cite{AL70,Nielsen,LR73a,LR73b}. And we give the explicit
characterization of the quantum states saturating the upper bound
of quantum discord, through which we demonstrate that saturating the
upper bound of quantum discord means that the measured subsystem cannot
be further correlated with the environment. For a two-qubit system,
quantum discord is strictly less than the von Neumann entropy
of the measured qubit of two-qubit states other than when the two-qubit
system in pure states.

This paper is organized as follows. In Sec. \ref{sec:review QD},
we give a brief review on quantum discord. In Sec. \ref{sec:UP_QD},
we show that there is a trade-off between the quantum discord and
the classical correlation, and then we prove a necessary and sufficient condition for the saturating of the upper bound of quantum
discord. Section \ref{sec:conclusion} is the conclusion.

\section{Review of quantum discord}

\label{sec:review QD}

First, we recall the concepts and properties of the quantum discord.
Let Hilbert space $\mathcal{H}=\mathcal{H}_{A}\otimes\mathcal{H}_{B}$
be a bipartite quantum system. Let $\mathcal{D}(\mathcal{H})$ be
a set of bounded, positive-semidefinite operators with unit trace
on $\mathcal{H}$. Given a bipartite quantum state $\rho^{AB}\in\mathcal{D}(\mathcal{H})$,
the von Neumann mutual information between the two subsystems $A$
and $B$ is defined as~\cite{Nielsen}
\begin{equation}
I(A:B):=S(A)+S(B)-S(AB),\label{eq:TC}
\end{equation}
where $S(X):=-\mathrm{Tr}\rho^{X}\log_{2}\rho^{X}$ is
the von Neumann entropy,  $\rho^{X}$ is a quantum state of system $X$~\cite{Nielsen}. The mutual information $I(A:B)$
is a measure of the total amount of correlations in the bipartite
quantum state. Generally speaking, it was divided into quantum correlation
and classical correlation~\cite{Vedral02,Vedral03,Zurek00,Ollivier02}.

The classical correlation was seen as the amount of information about
the subsystem $A$ that can be obtained via performing a measurement
on the other subsystem $B$. Then, the measure of the classical correlation~\cite{Vedral02}
is defined by
\begin{equation}
J(A|B):=\max_{\{E_{i}^{B}\}}\Big(S(A)-\sum_{i}p_{i}S(A|E_{i}^{B})\Big),\label{eq:CC}
\end{equation}
 where $\{E_{i}^{B}\}$ is the positive operator valued measure (POVM)
on $B$, $S(A|E_{i}^{B})$ is the von Neumann entropy of the post-measurement
states $\rho_{i}^{A}=$Tr$_{B}(E_{i}^{B}\rho^{AB})$ corresponds to
outcome $i$ with the probability $p_{i}=$Tr$(E_{i}^{B}\rho^{AB})$.
Therefore, quantum discord~\cite{Zurek00,Ollivier02} is defined by the
difference of the quantum mutual information and the classical correlation
\begin{equation}
D(A|B):=I(A:B)-J(A|B).\label{eq:QDA_B}
\end{equation}

Quantum discord is asymmetric with respect to $A$ and $B$,
in general, $D(A|B)\neq D(B|A)$, and it is always nonnegative~\cite{Zurek00,Ollivier02}. Quantum discord
vanishes if and only if there exists an optimal choice of measurement
$\{E_{i}^{B}\}$ on $B$ leaves the state $\rho^{AB}$ unperturbed~\cite{Terno11}.
The condition for zero-discord can been reduced to the equality condition
using relative entropy~\cite{Hayden04,Datta1003}, and was also discussed
in Refs~\cite{Ollivier02,Vedral11,Datta1003}. For any bipartite
pure state $|\psi\rangle^{AB}$, one checks that
\begin{equation}
D(B|A)=S(A)=S(B)=D(A|B).
\end{equation}

\section{The upper bound of quantum discord}
\label{sec:UP_QD}
\subsection{A trade-off between quantum discord and classical correlation}

For any general bipartite mixed state $\rho^{AB}$, we can always
find a tripartite pure state $\rho^{ABE}=|\Psi\rangle^{ABE}\langle\Psi|$
such that $\rho^{AB}=\mathrm{Tr}_{E}(\rho^{ABE})$, where $E$ represents
the environment. Hereafter, we will consider this purification about
general bipartite mixed state $\rho^{AB}$. The monogamic relation
between the entanglement of formation and the classical correlation
between the two subsystems is given by~\cite{Koashi04} \begin{subequations}
\begin{align}
E_{F}(A:E)+J(A|B)=S(A),\label{eq:KW2}
\end{align}
where $E_{F}(A:E)$ is entanglement of formation (EoF), defined as
$E_{F}(A:E)=\min_{\{p_{i},|\psi_{i}\rangle\}}\sum_{i}p_{i}S(\mathrm{Tr}_{E}(|\psi_{i}\rangle\langle\psi_{i}|))$,
where the minimum is taken over all pure ensembles $\{p_{i},|\psi_{i}\rangle\}$
satisfying $\rho^{AE}=\sum_{i}p_{i}|\psi_{i}\rangle\langle\psi_{i}|$~\cite{Bennett96,Wootters98,Wootters01}.

This relation is universal for any tripartite pure states. Thus, we
can give an other reorder version of Koashi-Winter relation,
\begin{align}
E_{F}(E:A)+J(E|B)=S(E).\label{eq:KW6}
\end{align}
 \end{subequations} Due to the symmetric property of the EoF, i.e.,
$E_{F}(A:E)=E_{F}(E:A)$, we can eliminate the EoF by combining Eqs.~(\ref{eq:KW2})
and (\ref{eq:KW6}), then we obtain
\begin{align}
J(A|B)-J(E|B)=S(A)-S(E).
\end{align}
 To be clearer, we substitute $D(A|B)=I(A:B)-J(A|B)$ into the above
equation, and obtain a trade-off between the quantum discord and the classical correlation
as follows
\begin{equation}
D(A|B)+J(E|B)=S(B).\label{eq:KWX_1}
\end{equation}
 This new monogamic relation tell us that the amount of quantum correlation
between $A$ and $B$, plus the amount of classical correlation between
$B$ and the complementary part $E$, must be equal to the entropy
of the measured subsystem $B$. More importantly, based on this new
monogamic equation, we can introduce
\begin{equation}
\tilde{J}(B/A):=S(B)-D(A|B),
\end{equation}
which quantifies classically correlative capacity of $B$ with other
systems except $A$. In other words, for general tripartite mixed
states $\rho^{ABC}$, the classical correlation between $B$ and $C$
cannot be greater than $\tilde{J}(B/A)$. To be convinced, let us
purify $\rho^{ABC}$ as $\rho^{ABC}=\mathrm{Tr}_{E}|\Psi\rangle^{ABCE}\langle\Psi|$,
then we have $J(CE|B)=\tilde{J}(B/A)$ due to the monogamic relation
(\ref{eq:KWX_1}). Because the classical correlation is non-increasing
under the local quantum operation~\cite{Vedral02}, then we have
\begin{equation}
J(C|B)\leq J(CE|B)=\tilde{J}(B/A),
\end{equation}
which is equivalent to
\begin{equation}
D(A|B)+J(C|B)\le S(B).\label{eq:tradeoff_mixed_states}
\end{equation}
We emphasize that Eq.~(\ref{eq:tradeoff_mixed_states}) is applicable
for any tripartite state $\rho^{ABC}$ and the equality holds if $\rho^{ABC}$
is pure. We can see that the total amount of the quantum discord between
two subsystems and the classical correlation between the measured
subsystem and the environment cannot exceed the von Neumann entropy
of the measured subsystem.

\subsection{The upper bound of quantum discord}

The monogamic relation (\ref{eq:KWX_1}) directly supplies a general
upper bound for the quantum discord, which was proved in Refs.~\cite{Datta1003,Xi11}.
In the following, we are going to determine which states saturate
this bound. Combining the monogamic relation (\ref{eq:KWX_1}) with
the equality condition for Araki-Lieb inequality, we have the following
result.

\begin{theorem}
\label{theorem:1} For the bipartite state $\rho^{AB}$, we have
\begin{equation}
D(A|B)\leq S(B)\label{eq:qd_leq_SB}
\end{equation}
 with equality if and only if there exist a decomposition of $\mathcal{H}^{A}$
as $\mathcal{H}^{A^{L}}\otimes\mathcal{H}^{A^{R}}$ such that
\begin{equation}
\rho^{AB}=\rho^{A^{L}}\otimes|\psi\rangle^{A^{R}B}\langle\psi|.
\end{equation}
\end{theorem}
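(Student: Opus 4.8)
The plan is to read the inequality directly from the monogamic relation and then pin down the equality case as a product condition on $\rho^{BE}$, which under purification is exactly the Araki-Lieb equality condition.

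First I would observe that the classical correlation is nonnegative, $J(E|B)\ge 0$, so the monogamic relation (\ref{eq:KWX_1}), $D(A|B)+J(E|B)=S(B)$, gives the bound (\ref{eq:qd_leq_SB}) at once, with equality precisely when $J(E|B)=0$. The whole statement therefore reduces to characterizing when the classical correlation $J(E|B)$ vanishes.

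Next I would prove the lemma that $J(E|B)=0$ if and only if $\rho^{BE}=\rho^B\otimes\rho^E$. The direction $\Leftarrow$ is immediate from the definition (\ref{eq:CC}), since for a product state every post-measurement state on $E$ equals $\rho^E$. For $\Rightarrow$, each term $S(E)-\sum_i p_i S(E|E_i^B)$ in (\ref{eq:CC}) is nonnegative by concavity of the von Neumann entropy, so a vanishing maximum forces every term to vanish; by strict concavity this means that every measurement on $B$, in particular every rank-one projective one, leaves the conditional state on $E$ equal to $\rho^E$. Writing $\rho^{BE}$ in block form with respect to a basis of $B$ and letting the projector run over that basis together with its equal-weight superpositions (with and without relative phases) then forces each block to be proportional to $\rho^E$, i.e. $\rho^{BE}=\rho^B\otimes\rho^E$. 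I expect this to be the main obstacle, since it is the only step that genuinely constrains the state rather than merely rearranging entropies.

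Finally I would convert the product condition into the stated decomposition using purity of $\rho^{ABE}$. The product form $\rho^{BE}=\rho^B\otimes\rho^E$ is equivalent to $I(B:E)=0$, i.e. $S(BE)=S(B)+S(E)$; substituting $S(BE)=S(A)$ and $S(E)=S(AB)$ gives $S(AB)=S(A)-S(B)$, the saturation of the Araki-Lieb inequality. To extract the structure, I would use that $A$ purifies the tensor product $\rho^B\otimes\rho^E$, so up to a local decomposition $\mathcal{H}^A=\mathcal{H}^{A^L}\otimes\mathcal{H}^{A^R}$ the global vector factorizes as $|\Psi\rangle^{ABE}=|\psi\rangle^{A^R B}\otimes|\phi\rangle^{A^L E}$, with $|\psi\rangle^{A^R B}$ and $|\phi\rangle^{A^L E}$ purifying $\rho^B$ and $\rho^E$ respectively. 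Tracing out $E$ yields $\rho^{AB}=\rho^{A^L}\otimes|\psi\rangle^{A^R B}\langle\psi|$ with $\rho^{A^L}=\mathrm{Tr}_E|\phi\rangle^{A^L E}\langle\phi|$, which is precisely the known characterization of the Araki-Lieb equality condition and the claimed form. For the converse I would simply compute $S(AB)=S(A^L)$, $S(A)=S(A^L)+S(A^R)$ and $S(B)=S(A^R)$ directly from this decomposition, giving $D(A|B)=S(B)$ and closing the equivalence. A secondary technical point to handle carefully is the factorization of the purification when the marginals $\rho^B$ or $\rho^E$ are degenerate, where the local isometry identifying the $A^L\otimes A^R$ splitting must be chosen with respect to the relevant supports.
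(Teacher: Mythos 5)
Your proposal is correct, and its skeleton coincides with the paper's up to the halfway point: both read the inequality off the trade-off relation $D(A|B)+J(E|B)=S(B)$, reduce saturation to $J(E|B)=0$, and reduce that to the product condition $\rho^{BE}=\rho^{B}\otimes\rho^{E}$. Two things distinguish your route. First, where the paper's Lemma~\ref{lemma:1} simply cites Henderson--Vedral for ``zero classical correlation iff product state,'' you prove it: strict concavity forces every post-measurement state of $E$ to equal $\rho^{E}$, and running rank-one projectors over a basis together with its phased two-element superpositions is a tomographically complete family, so every block $\langle b|\rho^{BE}|b\rangle$ is proportional to $\rho^{E}$ and the state factorizes; this argument is sound (the $p_{b}=0$ blocks vanish by positivity). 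Second, and more substantively, the paper converts the product condition into the entropic equality $S(A)-S(B)=S(AB)$ and then invokes the Zhang--Wu characterization of Araki--Lieb saturation, which itself rests on the Hayden--Jozsa--Petz--Winter structure theorem for strong subadditivity; you instead extract the decomposition directly, noting that $|\Psi\rangle^{ABE}$ purifies the product $\rho^{B}\otimes\rho^{E}$, and any purification of a product state is a tensor product of purifications up to an isometry into $A$, whence $\rho^{AB}=\rho^{A^{L}}\otimes|\psi\rangle^{A^{R}B}\langle\psi|$ with $A^{R}$ purifying $\rho^{B}$ and $A^{L}$ purifying $\rho^{E}$. Your route is more elementary and self-contained because it keeps the operator-level information ($\rho^{BE}$ is a product) rather than discarding it for the weaker entropy equality and then recovering structure from heavy machinery; what the paper's route buys is brevity by citation and the explicit conceptual identification of saturation with the Araki--Lieb equality, which the authors want to emphasize. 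One caveat you rightly flag is shared by both formulations: the isometry identifies $\mathcal{H}^{A^{L}}\otimes\mathcal{H}^{A^{R}}$ with the support of $\rho^{A}$, so strictly speaking the factorization is of that support (or requires dimensions to cooperate when one insists on factorizing all of $\mathcal{H}^{A}$); this mild abuse is present in the paper's statement as well.
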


To prove this theorem, we first introduce a lemma as follows.

\begin{lemma} \label{lemma:1}Considering different correlations
and entropy in the bipartite states, the following conditions are
equivalent
\begin{enumerate}
\item $D(A|B)=S(B)$;\label{enu:equivalence1}
\item $S(A)-S(B)=S(AB)$;\label{enu:equivalence2}
\item $E_{F}(A:B)=S(B)$.\label{enu:equivalence3}
\end{enumerate}
If one of them is satisfied, then the others are satisfied.
\end{lemma}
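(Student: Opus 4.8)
The plan is to prove that each of the three conditions is equivalent to the single auxiliary statement $I(B:E)=0$, evaluated on the fixed purification $\rho^{ABE}=|\Psi\rangle^{ABE}\langle\Psi|$ introduced above. Passing to the purification is what makes the argument uniform: it lets me convert freely between entropies of complementary subsystems, so that the purely entropic condition (\ref{enu:equivalence2}) and the two correlation conditions (\ref{enu:equivalence1}), (\ref{enu:equivalence3}) can all be read off from the vanishing of the mutual information $I(B:E)$.

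Three of the reductions are immediate. For (\ref{enu:equivalence1}), the monogamic relation (\ref{eq:KWX_1}), $D(A|B)+J(E|B)=S(B)$, together with $J(E|B)\ge 0$, shows that $D(A|B)=S(B)$ holds exactly when $J(E|B)=0$. For (\ref{enu:equivalence3}), I invoke the Koashi-Winter relation with $B$ as the reference system, which by its universality for tripartite pure states reads $E_{F}(B:A)+J(B|E)=S(B)$; using $E_{F}(A:B)=E_{F}(B:A)$ and $J(B|E)\ge 0$, this gives $E_{F}(A:B)=S(B)$ exactly when $J(B|E)=0$. For (\ref{enu:equivalence2}), purity of $\rho^{ABE}$ yields $S(BE)=S(A)$ and $S(E)=S(AB)$, so substituting into $I(B:E)=S(B)+S(E)-S(BE)$ turns $I(B:E)=0$ into $S(A)=S(B)+S(AB)$, which is precisely condition (\ref{enu:equivalence2}).

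What remains, and is the main obstacle, is to tie the two correlation conditions to $I(B:E)=0$ by showing that a vanishing classical correlation already forces a product state, i.e. $J(E|B)=0 \Leftrightarrow J(B|E)=0 \Leftrightarrow \rho^{BE}=\rho^{B}\otimes\rho^{E} \Leftrightarrow I(B:E)=0$. The implications from a product state are trivial, so the real work lies in the direction $J(E|B)=0 \Rightarrow \rho^{BE}=\rho^{B}\otimes\rho^{E}$ (and symmetrically for $J(B|E)$). Here I would use strict concavity of the von Neumann entropy: $J(E|B)=0$ means that for every POVM on $B$ the induced ensemble $\{p_{i},\rho_{i}^{E}\}$ obeys $\sum_{i}p_{i}S(\rho_{i}^{E})=S(\rho^{E})$, and strict concavity forces every conditional state $\rho_{i}^{E}$ with $p_{i}>0$ to coincide with $\rho^{E}$. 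Imposing this for all measurements on $B$, in particular all rank-one projective ones in every basis, annihilates the off-diagonal coherences of $\rho^{BE}$ and pins each diagonal block to be proportional to $\rho^{E}$, giving $\rho^{BE}=\rho^{B}\otimes\rho^{E}$. Chaining these equivalences then closes the loop: (\ref{enu:equivalence1}) $\Leftrightarrow J(E|B)=0 \Leftrightarrow I(B:E)=0$, (\ref{enu:equivalence3}) $\Leftrightarrow J(B|E)=0 \Leftrightarrow I(B:E)=0$, and $I(B:E)=0 \Leftrightarrow$ (\ref{enu:equivalence2}), so all three conditions are equivalent.
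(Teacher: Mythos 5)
Your proof is correct, and its skeleton coincides with the paper's: condition (\ref{enu:equivalence1}) is converted to $J(E|B)=0$ via the trade-off relation (\ref{eq:KWX_1}), condition (\ref{enu:equivalence3}) to $J(B|E)=0$ via the reordered Koashi--Winter relation together with the symmetry of the EoF, and condition (\ref{enu:equivalence2}) to the statement that $B$ and $E$ are uncorrelated, using the purity identities $S(BE)=S(A)$ and $S(E)=S(AB)$ --- your pivot $I(B:E)=0$ is literally the paper's $S(EB)=S(B)+S(E)$. The genuine difference is how the bridge fact is handled: the paper simply cites Henderson--Vedral and Vedral for ``vanishing classical correlation if and only if product state,'' whereas you prove the nontrivial direction yourself, from strict concavity of the von Neumann entropy applied to the post-measurement ensembles of all rank-one projective measurements. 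That argument is sound: in every basis of $B$ the diagonal blocks of $\rho^{BE}$ must be proportional to $\rho^{E}$, and ranging over superposition bases forces every block $\langle j|\rho^{BE}|k\rangle$ to be proportional to $\rho^{E}$, hence $\rho^{BE}=\rho^{B}\otimes\rho^{E}$; this buys you a self-contained lemma where the paper's proof is not. One loose end to tighten: your chain also needs the implication $I(B:E)=0\Rightarrow\rho^{BE}=\rho^{B}\otimes\rho^{E}$ in order to pass from condition (\ref{enu:equivalence2}) back to (\ref{enu:equivalence1}) and (\ref{enu:equivalence3}), yet you only declared trivial the implications \emph{from} a product state. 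The missing direction is the equality condition for subadditivity, equivalently the faithfulness of the relative entropy $S\bigl(\rho^{BE}\,\|\,\rho^{B}\otimes\rho^{E}\bigr)=I(B:E)$; it is standard, and the paper invokes it just as silently inside its asserted equivalence (\ref{eq:equivalence2_0}), but since your write-up is explicitly organized around $I(B:E)=0$ as the hub, that step should be stated rather than left implicit.
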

\begin{proof}
The equivalence of these conditions can be proved by the trade-off
relation (\ref{eq:KWX_1}) for the purification $|\psi^{ABE}\rangle$.
From Eq. (\ref{eq:KWX_1}), we can see that $D(A|B)=S(B)$ is equivalent
to $J(E|B)=0$. It is known that the classical correlation vanishes
if and only is the states are product states, see Refs.~\cite{Vedral02,Vedral03}.
So we have
\begin{eqnarray}
J(E|B)=0 & \Leftrightarrow & \rho^{EB}=\rho^{E}\otimes\rho^{B}\label{eq:equavalence_EB_product}\\
 & \Leftrightarrow & S(EB)=S(B)+S(E).\label{eq:equivalence2_0}
\end{eqnarray}
 For the tripartite pure states, we have $S(A)=S(EB)$ and $S(E)=S(AB)$.
Thus Eq.(\ref{eq:equivalence2_0}) is equivalent to $S(A)-S(B)=S(AB)$
and we obtain the equivalence between the first and second conditions.
Meantime, the third condition is equivalent to $J(B|E)=0$ due to
the Koashi-Winter relation $E_{F}(A:B)=S(B)-J(B|E)$. Again using
the property of the classical correlation, we can see the third condition
is equivalent to $\rho^{EB}=\rho^{E}\otimes\rho^{B}$, and in turn
equivalent to the first condition.
\end{proof}
Form the above lemma, we will complete the proof of Theorem~\ref{theorem:1}.
\begin{proof}[Proof of Theorem~\ref{theorem:1}]
The inequality was recently given in~\cite{Xi11,Datta1003}, and
it is also a direct consequence of Eq.(\ref{eq:KWX_1}).
Therefore, we obtain that quantum discord $D(A|B)$ saturates the upper bound $S(B)$
if and only if $S(A)-S(B)=S(AB)$, which is the equality condition
of the Araki-Lieb inequality \cite{AL70}
\begin{equation}
\left|S(A)-S(B)\right|\leq S(AB)
\end{equation}
 when $S(A)\geq S(B)$.

 Recently, the quantum states saturated the
Araki-Lieb inequality was explicitly given by Zhang and Wu \cite{Zhang11},
by using the explicit characterization of quantum states that saturates
the strong subadditivity inequality \cite{Hayden04} and the relation
between the Araki-Lieb inequality and the strong subadditivity inequality
\cite{Nielsen}. From the result in Ref~\cite{Zhang11}, we know
that $S(A)-S(B)=S(AB)$ if and only if there exist a decomposition
$\mathcal{H}^{A}=\mathcal{H}^{A^{L}}\otimes\mathcal{H}^{A^{R}}$ such
that
\begin{equation}
\rho^{AB}=\rho^{A^{L}}\otimes|\varphi\rangle^{A^{R}B}\langle\varphi|.
\end{equation}
This completes this proof of Theorem~\ref{theorem:1}.
 \end{proof}

\begin{figure}
\centering{}\includegraphics[scale=0.8]{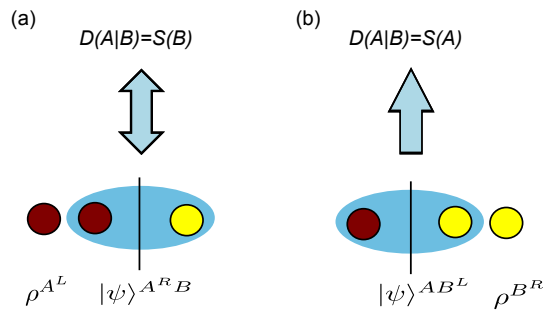} \caption{Schematics diagram for the Theorem~\ref{theorem:1} and~\ref{theorem:2}.
In both subfigures, the right side of the solid line are the measured
subsystems B. In (a), the quantum discord saturates the general upper
bound given by the von Neumann entropy of the measured subsystem $B$
if and only if there exists a decomposition of $\mathcal{H}^{A}$
as $\mathcal{H}^{A^{L}}\otimes\mathcal{H}^{A^{R}}$ such that $\rho^{AB}=\rho^{A^{L}}\otimes|\varphi\rangle^{A^{R}B}\langle\varphi|$;
In (b), the quantum discord equals the von Neumann entropy of the
unmeasured subsystem $A$ if there exist a decomposition of $\mathcal{H}^{B}$
as $\mathcal{H}^{B^{L}}\otimes\mathcal{H}^{B^{R}}$ such that $\rho^{AB}=|\psi\rangle^{AB^{L}}\langle\psi|\otimes\rho^{B^{R}}$.
\label{fig:trap}}
\end{figure}

This result shows that quantum discord is equal to the measured system
entropy if and only if the equality in the Araki-Lieb inequality holds.
It is an interesting thing that the measured subsystem $B$ cannot
correlate with the environment $E$ if the quantum discord between
$A$ and $B$ is equal to the entropy of the measured subsystem. In
other words, there must exists an isolated pure subsystem enclosing
the measured subsystem in the Hilbert space $\mathcal{H}_{A}\otimes\mathcal{H}_{B}$
when $D(A|B)=S(B)$. One can also see that given the entropy of the
measured subsystem, the maximal quantum discord between $A$ and $B$
will forbid system $B$ from being correlated to other systems outside
this composite system.

Due to the asymmetric property of quantum discord, one may ask whether
or not Theorem~\ref{theorem:1} still holds if we consider
$S(A)$ instead of $S(B)$. In fact, a conjecture about the von Neumann
entropy and quantum discord was presented by Luo, Fu and Li in Ref.~\cite{Luo10},
namely,
\begin{equation}
D(A|B)\leq\min\{S(A),S(B)\}.
\end{equation}
Later, Li and Luo showed that the part $D(A|B)\leq S(A)$ fails in
general, but might be true for low-dimension systems~\cite{LuoPRAa}.

We only have the following sufficient condition for the situation of $D(A|B)=S(A)$.

\begin{theorem}\label{theorem:2} For the bipartite state $\rho^{AB}$,
we have
\begin{equation}
D(A|B)=S(A),
\end{equation}
 if the equality $S(B)-S(A)=S(AB)$ is satisfied. \end{theorem}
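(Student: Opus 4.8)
The plan is to reuse the Koashi--Winter bookkeeping that underlies Lemma~\ref{lemma:1} and Theorem~\ref{theorem:1}, but now arranged so that $S(A)$ rather than $S(B)$ is produced on the right-hand side. I would fix a purification $|\Psi\rangle^{ABE}$ of $\rho^{AB}$ and start from the Koashi--Winter relation $E_{F}(A:E)+J(A|B)=S(A)$ of Eq.~(\ref{eq:KW2}). Writing $J(A|B)=S(A)-E_{F}(A:E)$ and substituting it into $D(A|B)=I(A:B)-J(A|B)$ together with $I(A:B)=S(A)+S(B)-S(AB)$ from Eq.~(\ref{eq:TC}) yields the working identity
\begin{equation}
D(A|B)=S(B)-S(AB)+E_{F}(A:E),
\end{equation}
which holds for every bipartite state. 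The entire theorem then reduces to showing that the hypothesis forces $E_{F}(A:E)=0$ while fixing $S(AB)=S(B)-S(A)$.

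Next I would translate the hypothesis into a statement about the $A|E$ cut. Using the pure-state identities $S(A)=S(BE)$, $S(B)=S(AE)$ and $S(AB)=S(E)$ for $|\Psi\rangle^{ABE}$, the assumed equality $S(B)-S(A)=S(AB)$ becomes $S(AE)-S(BE)=S(E)$, i.e.\ $S(AE)=S(A)+S(E)$. This is precisely the saturation of subadditivity across the $A|E$ cut, so $\rho^{AE}=\rho^{A}\otimes\rho^{E}$ is a product (hence separable) state and therefore $E_{F}(A:E)=0$. Feeding this back into the working identity and using $S(AB)=S(B)-S(A)$ gives $D(A|B)=S(B)-S(AB)=S(A)$, as claimed.

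The main obstacle is the asymmetry of discord. The condition $S(B)-S(A)=S(AB)$ is exactly the hypothesis of Theorem~\ref{theorem:1} with $A$ and $B$ interchanged, and a naive interchange would only control $D(B|A)$; the nontrivial point is that one must feed the Koashi--Winter relation \emph{anchored on the measured subsystem} $B$, namely Eq.~(\ref{eq:KW2}), and then verify that $E_{F}(A:E)$ genuinely vanishes rather than merely being bounded below. The subadditivity-saturation step is what makes this rigorous. The same identity $D(A|B)=S(A)+E_{F}(A:E)$ (valid under the hypothesis) also explains why only an implication is available: a state could in principle satisfy $D(A|B)=S(A)$ with $E_{F}(A:E)=S(A)-S(B)+S(AB)>0$, so the converse fails in general, consistent with the failure of $D(A|B)\le S(A)$ noted above. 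As an alternative route I could instead invoke the Zhang--Wu characterization used for Theorem~\ref{theorem:1}: the Araki--Lieb equality $S(B)-S(A)=S(AB)$ gives the decomposition $\rho^{AB}=|\psi\rangle^{AB^{L}}\langle\psi|\otimes\rho^{B^{R}}$ of Fig.~\ref{fig:trap}(b), for which $I(A:B)=2S(A)$ and a measurement of $B^{L}$ in the Schmidt basis of $|\psi\rangle^{AB^{L}}$ collapses $A$ onto pure states, so $J(A|B)=S(A)$ and again $D(A|B)=S(A)$. I expect the trade-off route above to be the cleaner of the two, since it avoids constructing an explicit optimal measurement.
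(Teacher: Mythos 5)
Your main argument is correct, and it is a genuinely different proof from the paper's. The paper proves Theorem~\ref{theorem:2} structurally: it invokes the Zhang--Wu characterization of the Araki--Lieb equality (the same tool it uses for Theorem~\ref{theorem:1}) to obtain the decomposition $\rho^{AB}=|\psi\rangle^{AB^{L}}\langle\psi|\otimes\rho^{B^{R}}$ of Eq.~(\ref{eq:decomposition_B}), and then simply reads off $D(A|B)=D(A|B^{L})=S(A)$ --- which is exactly your ``alternative route,'' where you in fact supply more detail than the paper does (the computation $I(A:B)=2S(A)$ and the Schmidt-basis measurement on $B^{L}$ giving $J(A|B)=S(A)$). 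Your primary route instead stays entirely at the entropic level: the identity $D(A|B)=S(B)-S(AB)+E_{F}(A:E)$, obtained by combining Eq.~(\ref{eq:TC}) with the Koashi--Winter relation Eq.~(\ref{eq:KW2}), reduces the theorem to showing $E_{F}(A:E)=0$; the hypothesis $S(B)-S(A)=S(AB)$ translates via the purification identities $S(B)=S(AE)$, $S(A)=S(BE)$, $S(AB)=S(E)$ into $S(AE)=S(A)+S(E)$, i.e.\ saturated subadditivity across the $A|E$ cut, which forces $\rho^{AE}=\rho^{A}\otimes\rho^{E}$ and hence $E_{F}(A:E)=0$. Each step checks out. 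The trade is instructive: your version replaces the heavy structure theorem (which ultimately rests on the Hayden--Jozsa--Petz--Winter characterization of saturated strong subadditivity) with the elementary equality condition for subadditivity, and the identity $D(A|B)=S(A)+E_{F}(A:E)$ valid under the hypothesis makes transparent why only sufficiency is claimed here, in contrast to the two-sided Theorem~\ref{theorem:1}. What the paper's structural route buys in exchange is the explicit form of the state, which it needs anyway for the chain of equalities in Eq.~(\ref{eq:all_equals}) and for the interpretation in Fig.~\ref{fig:trap}(b); your entropic argument does not deliver that decomposition.
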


\begin{proof} According to the equality condition for the Araki-Lieb
inequality, $S(B)-S(A)=S(AB)$ is equivalent to that there exists a decomposition
$\mathcal{H}^{B}$ as $\mathcal{H}^{B^{L}}\otimes\mathcal{H}^{B^{R}}$
such that
\begin{equation}
\rho^{AB}=|\psi\rangle^{AB^{L}}\langle\psi|\otimes\rho^{B^{R}}.\label{eq:decomposition_B}
\end{equation}
For this density matrix, we can get
\begin{equation}
D(A|B)=D(A|B^{L})=S(A),
\end{equation}
where $D(A|B^{L})$ is the quantum discord for the pure state $|\psi\rangle^{AB^{L}}$.
\end{proof}

Besides, applying the Theorem \ref{theorem:1} and Lemma \ref{lemma:1},
we can see that all the following quantities are equal:
\begin{align}
D(A|B)=D(B|A)=E_{F}(A:B)=S(A)=S(B^{L})\label{eq:all_equals}
\end{align}
for the quantum state (\ref{eq:decomposition_B}).

As an illustration, let us consider the following example
\begin{equation}
\rho^{AB}=\frac{1}{4}\left(\begin{array}{cccccccc}
1 & 0 & 0 & 0 & 0 & 0 & 1 & 0\\
0 & 1 & 0 & 0 & 0 & 0 & 0 & 1\\
0 & 0 & 0 & 0 & 0 & 0 & 0 & 0\\
0 & 0 & 0 & 0 & 0 & 0 & 0 & 0\\
0 & 0 & 0 & 0 & 0 & 0 & 0 & 0\\
0 & 0 & 0 & 0 & 0 & 0 & 0 & 0\\
1 & 0 & 0 & 0 & 0 & 0 & 1 & 0\\
0 & 1 & 0 & 0 & 0 & 0 & 0 & 1
\end{array}\right),
\end{equation}
where $\mathcal{H}_{A}=\mathbb{C}^{2}$ and $\mathcal{H}_{B}=\mathbb{C}^{4}$.
The reduced states can be obtained $\rho^{A}=\frac{I^{A}}{2}$ and
$\rho^{B}=\frac{I^{B}}{4}$, where $I^{A}$ and $I^{B}$ are identity
operators on $\mathcal{H}_{A}$ and $\mathcal{H}_{B}$, respectively.
After some calculations one obtains
\begin{equation}
S(AB)=1,\: S(A)=1,\: S(B)=2.
\end{equation}
Therefore, this state satisfies $S(B)-S(A)=S(AB)$. On the other hand,
one checks that $\mathcal{H}_{B}=\mathcal{H}_{B^{L}}\otimes\mathcal{H}_{B^{R}}$
with $\mathcal{H}_{B^{L}}=\mathcal{H}_{B^{R}}=\mathbb{C}^{2}$. Then,
$\rho^{AB}$ is in the form of Eq.~(\ref{eq:decomposition_B}) with
$|\psi\rangle^{AB^{L}}=\frac{1}{\sqrt{2}}\big(|00\rangle+|11\rangle\big)$
and $\rho^{B^{R}}=\frac{I^{B^{R}}}{2}$, where $I^{B^{R}}$ is identity
operator on $\mathcal{H}_{B^{R}}$. With the result of Eq.~(\ref{eq:all_equals}),
we have
\begin{equation}
D(A|B)=D(B|A)=E_{F}(A:B)=1
\end{equation}
for this quantum state.

Through Theorem~\ref{theorem:1} and Theorem~\ref{theorem:2}, we identify
two conditions to witness the trapping of correlation within a pure-state
subsystem, see Fig. \ref{fig:trap}. This opens a new way to investigate
whether the correlation between A and B is essentially the correlation
between smaller parts of them.

\subsection{Arbitrary two-qubit states}

Now, we will discuss the application of our results in two-qubit system.
A two-dimension Hilbert space cannot be decomposed any more. The only
possibility of $|S(A)-S(B)|=S(AB)$ for two-qubit states is that $\rho^{AB}$
is a pure state. Hence, its upper bound is not reachable except two-qubit
pure state~\cite{Xi11}, namely, we have
\begin{equation}
D(B|A)<S(A),D(A|B)<S(B),
\end{equation}
for any mixed two-qubit state.

\section{conclusion}

\label{sec:conclusion}

In this work, we have given a new monogamic relation between  the quantum discord and the classical
correlation in terms of the Koashi-Winter relations.
Based on the equality conditions for the Araki-Lieb inequality, we have
given a necessary and sufficient condition for the saturating of the
upper bound of quantum discord. We have shown that the subsystem of
the bipartite system can not correlated with the other system if the
quantum discord of the bipartite system is equal to the entropy of
the measured subsystem. We showed that there are some mixed states
where quantum discord is equal to entanglement of formation. In particular,
for two-qubit state, its upper bound is not reachable except pure
state.

\section{acknowledgments}

The authors are very grateful to the referees for helpful comments and criticisms. We thank A. Datta for interesting discussions. Z. J. Xi is supported
by the Superior Dissertation Foundation of Shaanxi Normal University
(S2009YB03). X.-M. Lu is supported by National Research Foundation and
Ministry of Education, Singapore (Grant No. WBS: R-710-000-008-271). X. Wang is supported by NSFC with
Grants No.11025527, No.10874151, and No.10935010.
Y. M. Li is supported by NSFC with Grant No.60873119, and the Higher
School Doctoral Subject Foundation of Ministry of Education of China
with Grant No.200807180005.

\end{document}